\keywords{Unknown order groups, ideal class groups, hyperelliptic curves} 
\newif\ifllncs
\newcommand{\TODO}[1][]{\textbf{\color{red}\ifthenelse{\isempty{#1}}{TODO}{TODO: #1}}}
\newcommand{\Z}{\ensuremath{\mathbb{Z}}}
\newcommand{\F}{\ensuremath{\mathbb{F}}}
\newcommand{\Fbar}{\ensuremath{\overline{\mathbb{F}}}}
\newcommand{\Q}{\ensuremath{\mathbb{Q}}}
\newcommand{\softO}{\ensuremath{\widetilde{O}}}
\newcommand{\Jac}[1]{\ensuremath{J_{#1}}}
\newcommand{\Div}{\ensuremath{\operatorname{Div}}}
\newcommand{\bound}{\ensuremath{s(\lambda)}}
\DeclareMathOperator{\lcm}{lcm}
\newcommand{\hide}[1]{}
\newtheorem{assum}{Assumption}
\abstract{
    Groups whose order is computationally hard to compute have important applications
including time-lock puzzles, verifiable delay functions, and accumulators.
Many applications require trustless setup: that is,
not even the group's constructor knows its order.
We argue that the impact of Sutherland's generic
group-order algorithm has been overlooked in this context,
and that current parameters do not meet claimed security levels.
We propose updated parameters,
and a model for security levels 
capturing the subtlety of trustless setup.
The most popular trustless unknown-order group candidates
are ideal class groups of imaginary quadratic fields;
we show how to compress class-group elements
from $\approx 2\log_2(N)$ 
to $\approx \tfrac{3}{2}\log_2(N)$ bits,
where $N$ is the order.
Finally, we analyse Brent's proposal of Jacobians of hyperelliptic curves as unknown-order groups.
Counter-intuitively,
while polynomial-time order-computation algorithms
for hyperelliptic Jacobians exist in theory,
we conjecture that genus-$3$ Jacobians offer shorter keylengths
than class groups in practice.

}
\begin{document}

\begin{NoHyper}
\articleinformation 
\end{NoHyper}


\section{
    Introduction
}

Interest in groups of unknown order has been fuelled in recent years by
applications such as delay functions \cite{cryptoeprint:2018:712},
accumulators \cite{10.1007/978-3-030-26948-7_20}, and zero-knowledge
proofs of knowledge \cite{cryptoeprint:2019:1229}.  As the name
suggests, a group $G$ has \emph{unknown order} if it has a compact representation, but it is infeasible for
anyone to compute the order of $G$ efficiently without access to any secret
information used to construct $G$.
In the case of \emph{trustless setup}, the order should not even be known to the creator(s) of the group.
Some use-cases may require additional properties,
such as
the \emph{low order}
or the \emph{adaptive root assumptions}.
In order to be useful,
group operations in $G$ should be efficiently computable;
elements of $G$ should have a compact representation;
and it should be possible to efficiently sample random elements of~$G$.

Previously, there have been two proposals for concrete
unknown-order groups: RSA groups~\cite{rivest1996time}, and ideal class
groups of imaginary quadratic fields~\cite{2012/Lipmaa,Buchmann1988}.
Brent briefly suggested
hyperelliptic Jacobians as unknown-order groups~\cite{RichardP.:2000:PKC:891158};
but unlike RSA and class
groups, Jacobians have received little further attention.

RSA groups are
groups of the form $(\Z/N\Z)^{\times}$, where $N = pq$ is the product of two primes.
Computing the order of $(\Z/N\Z)^{\times}$ is equivalent to factoring~$N$.
A trusted party can efficiently generate an RSA modulus that resists all known attacks (including Sutherland's algorithm).
Sander~\cite{10.1007/978-3-540-47942-0_21} gave an algorithm to
trustlessly generate a modulus $N$ such that (with very high probability) $N$ has two large
factors---he calls this an
RSA-UFO (unknown factorisation object).
However, to match even the lower security of 1024-bit RSA moduli,
RSA-UFOs need ``bit length (much) greater than 40,000 bits'';
this is far too large to be efficient in most unknown-order group applications.

Class groups, on the
other hand, can be generated without trusted setup,
and so have received a lot of recent attention
(see e.g.~\cite{2012/Lipmaa,10.1007/978-3-030-17659-4_13,10.1007/978-3-030-26948-7_20}).
Buchmann and Hamdy~\cite{Buchmann01asurvey}
suggested that $1665$-bit discriminants (\(\approx 833\)-bit orders)
provide security equivalent to $3072$-bit RSA (i.e., 128-bit security);
more recently,
Biasse, Jacobson Jr., and Silvester~\cite{BJS10} claim that $1827$-bit discriminants
(\(\approx 914\)-bit orders)
are required to reach the same security level.

But the usual notions of security level are not appropriate
when evaluating class-group security for applications such as
accumulators, where the group is fixed.
The computational assumptions underlying security are not defined for a
fixed group, and there is no random self-reduction 
to show that all instances have the same security.
We argue that much larger group sizes are needed for secure
unknown-order groups in applications where the group is fixed for many users and used for a long period of time.

Precisely, we propose a new security model for unknown-order groups,
depending on two parameters $(\lambda, \rho)$.
\begin{definition}
    \label{def:lambda-rho}
    Let $\texttt{Gen}$ be an algorithm outputting groups.
    We say that $\texttt{Gen}$ reaches security level $(\lambda,\rho)$
    if
    with probability $1 - 1/2^\rho$ over the outputs of $\texttt{Gen}$,
    any algorithm $\mathcal{A}$ given an output $G$ of $\texttt{Gen}$
    requires at least $2^\lambda$ bit operations to
    succeed in computing $\#G$ with probability close to 1.
\end{definition}



A similar concept of security is implicit in~\cite{AlbrechtMPS18},
which considers the security of cryptosystems depending on a prime
system parameter $p$ provided by a possibly malicious party,
when in practice the users only verify the primality of $p$---and
thus ensure the security level of the system---up to a certain probability
(if at all).  One recommendation of~\cite{AlbrechtMPS18}
is that users ``ensure that composite numbers are wrongly identified as
being prime with probability at most \(2^{-128}\)'',
corresponding to \(\rho = 128\).

In our context, the probabilistic nature of security is not due to
malicious parties, or unreliable verification,
but rather to a fundamental mathematical fact:
the distribution of random Abelian group orders.
Our security definition is motivated by
Sutherland's generic group order algorithm~\cite{sutherland2007order},
whose runtime depends on the (unknown) order itself,
rather than the supposed size of the order.

The relevance of Sutherland's algorithm to cryptographic unknown-order groups
seems to have been overlooked,
but in~\S\ref{sec:class-group-security}
we show that it attacks some 
class groups with parameters that are widely considered secure. 
For example:
Sutherland's algorithm can compute the order of a class group with $1827$-bit  discriminant (i.e., 914-bit group order) 
in $\approx 2^{59}$ operations with probability ${\approx 2^{-58}}$, 
or in $\approx 2^{114}$ operations with probability $\approx 2^{-20}$. 
The problem is that among class groups with prime discriminants of a given size,
there is a set of weak instances \emph{depending on the order}.
A randomly-generated group is only vulnerable with small probability;
but since the order is unknown, we cannot check for vulnerability
without simply attempting to run the algorithm for the given time
(in contrast, in the RSA setting with trusted setup, the group order is
known to its generator, who can thus easily choose a group that is
not vulnerable).
For these reasons, we consider that $1827$-bit discriminants (and even the $2048$-bit discriminants suggested in~\cite{10.1007/978-3-030-26948-7_20,cryptoeprint:2019:1229}) do not meet the requirements for $128$-bit security.

We propose new group sizes in response,
depending on $(\lambda, \rho)$.
A paranoid choice, $(\lambda,\rho) = (128,128)$,
requires group sizes of around 3400 bits.
A more realistic (but still cautious) choice, $(\lambda , \rho) = (128, 55)$,
requires 1900-bit group sizes (and so 3800-bit discriminants; more than double the previous suggestion).
Table~\ref{table:parameters} gives sizes of group orders for various combinations of $(\lambda, \rho)$. An alternative would be to use multiple smaller groups in parallel, however we believe this approach is less efficient than working in a single larger group.

Our second major result is a more compact representation of class group elements.
Inspired by a signature-compression method of Bleichenbacher~\cite{Ble04},
in~\S\ref{sec:compression}
we compress elements of class groups to $3/4$ the size of the usual
representation---a particularly welcome saving
in the light of our updated, much larger class group parameters.

Our third and more speculative contribution, in~\S\ref{sec:concrete}, is
an analysis of Brent's proposal of subgroups of Jacobians of hyperelliptic genus-3 curves
as a source of unknown-order groups with trustless setup.
We find that Jacobians offer a distinct advantage over class groups at
the same security level:
the element representation size is smaller (2/3 the size if our new class group compression algorithm is used; if not, 1/2 the size),
since point compression for curves is optimal.
Using Jacobians also allows us to take advantage of the wealth of
algorithms for group operations and exponentiation
that have been developed and implemented
for hyperelliptic discrete-log-based cryptography,
which may be more efficient than their class-group equivalents
(though the lack of recent competitive implementations makes it
difficult to compare Jacobians and class groups in terms of real-world speed).

We acknowledge that there are, in theory, polynomial-time algorithms to
compute the group order of hyperelliptic Jacobians~\cite{pila1990frobenius,10.1007/10722028_18}.
However, there is evidence that these algorithms are already impractical
for discrete-log-based cryptographic group orders of around 256-bits,
let alone the much larger group orders that we have in mind.
While curves of any genus \(\ge 2\) might be considered, we suggest that
genus-3 curves are the best choice: their point-counting algorithms
are already very complex,
and their DLP is harder relative to higher-genus curves.
Naturally, if Schoof-type algorithms for genus 3 could be made efficient
over large prime fields,
then these groups would become insecure---but at least we have provided motivation for such work.

Some unknown-order group protocols
make stronger assumptions:
for example,
that finding elements of a given order is hard,
or that
extracting roots of a given element is hard.
In~\S\ref{sec:known-order} we consider
the problem of constructing points of known order
in class groups and Jacobians,
and explain how we might
work with Jacobians
when the low-order or adaptive root assumptions are imposed.


\paragraph{Acknowledgements.}
We thank Dan Boneh, Benjamin Wesolowski, Steve Thakur, and Jonathan Lee
for their valuable comments, feedback and suggestions on an earlier
version of this work.
We also thank Edward Chen, Luca De Feo, and Jean Kieffer
for beneficial discussion,
and the anonymous reviewers for their feedback and comments.

\paragraph{Notation.}
Recall that $\softO(x) = O((\log{x})^c \cdot x)$ for some constant $c$,
and for subexponential algorithms,
$L_x(\alpha) = \exp{[(1 +
o(1))(\log{x})^\alpha (\log{\log{x}})^{1-\alpha}]}$ for \(0 \le \alpha \le 1\).

\section{
    Sutherland's algorithm: the security of generic groups
}
\label{sec:sutherland}

Sutherland's \emph{primorial-steps}
algorithm~\cite[Algorithm 4.2]{sutherland2007order}
computes the order of an element in a generic group;
it can be used to probabilistically determine the exponent of a group.
Remarkably, it runs in $O(\sqrt{N/\log\log N}) = o(\sqrt{N})$ time
(where $N$ is the group order)
in the worst case,
but in fact the expected runtime depends heavily on the multiplicative structure of $N$.
The algorithm runs particularly quickly when $N$ is smooth,
which we do not expect (or desire!) in unknown-order groups;
but it also poses a significant threat to a larger class of groups.

Sutherland's algorithm is based on Shanks' baby-step giant-step
(BSGS) algorithm, but one can also use Pollard rho. Suppose we wish to compute the order of $\alpha$.
Instead of computing consecutive powers of $\alpha$ in the
baby-steps, we compute a new element $\beta = \alpha^E$
such that the order of $\beta$ is coprime to all primes
$2,3,\ldots,p_n \leq L$ for a chosen bound $L$,
by taking $E$ to be the product of the $p_i$,
each raised to an appropriate exponent $\lfloor \log_{p_i}(M) \rfloor$
(where $M$ is an upper bound of the group order).
The baby-steps are then all powers of $\beta$ with exponents coprime to $P_n$,
and the giant-step exponents are multiples of $P_n$,
where $P_n = \prod_{i=1}^np_i$.
As in BSGS, a collision allows $| \beta |$ to be learnt, which then allows $|\alpha |$ to be computed very efficiently.

Sutherland shows that if the order $N$ of $\alpha$ is uniformly distributed
over $[1,M]$ (for sufficiently large $M$) and $L = M^{1/u}$, then
this is an $O(M^{1/u})$ time and space algorithm that successfully computes $N$ with probability $P \geq G(1/u,2/u)$ \cite[Proposition 4.7]{sutherland2007order}.
Here $G(r,s)$ is the semismooth probability function
\[
G(r,s) = \lim_{x \to \infty} \psi(x, x^s, x^r) / x
\]
for \(0 < r < s < 1\),
where $\psi(x, y, z)$ is the number of integers up to $x$
semismooth with respect to $y$ and $z$ (that is, all prime factors less
than $y$, with at most one greater than $z$).

\begin{table}[ht]
    \centering
    \caption{Asymptotic semismoothness probabilities from \cite{sutherland2007order} and \cite{bach1996asymptotic}}
    \begin{tabular}{rl|rl|rl}
        $u$ & $G(1/u,2/u)$
        &
        $u$ & $G(1/u,2/u)$
        &
        $u$ & $G(1/u,2/u)$
        \\
        \hline
        \hline
        2.1 & 0.9488
        &
        5.0 & 0.4473
        &
        12.0 & 4.255e-12 \\
        \hline
        2.9 & 0.5038
        &
        6.0 & 1.092e-03
        &
        16.0 & 6.534e-19 \\
        \hline
        3.0 & 0.4473
        &
        10.0 & 5.382e-09
        &
        20.0 & 2.416e-26 \\
        \hline
    \end{tabular}
\label{table:gvals}
\end{table}

For each choice of $\rho$ (corresponding to the probability that a weak
group is generated), one must determine $u$ such that $1/2^\rho \approx G( 1/u, 2/u )$.
It then follows that the group size should be at least $u \lambda$ bits so that a $1/u$-th root attack requires at least $2^\lambda$ operations.
Table~\ref{table:gvals} gives some numerically computed values for $G(1/u,2/u)$
from~\cite{bach1996asymptotic} and~\cite{sutherland2007order}.
Using the method of Banks and Shparlinski~\cite{banks2006integers} to approximate the density of semismooth numbers,
we calculate that for a success probability of less than $2^{-100}$,
we should take $u = 22.5$; for $2^{-128}$, we should take $u = 26.5$.

As we mentioned in the introduction, taking $(\lambda,\rho) = (128,55)$
leads to \(\approx1920\)-bit group orders,
because $\rho = 55$ corresponds to $u = 15$, and $15 \cdot 128 = 1920$.
A more conservative choice would be $(\lambda,\rho) = (128,128)$,
corresponding to $u \approx 26.5$ and implying 3392-bit groups.
Table~\ref{table:parameters} gives sizes of group orders for various combinations of $(\lambda, \rho)$.
Once again we stress that our setting is different to the usual world of security levels.
We are dealing with a fixed class of weak instances of the computational problem.

%
%

\begin{table}[ht]
\centering
    \caption{Group size (bits) for various attack success probabilities
    \(2^{-\rho}\) and running costs \(2^\lambda\)}
 \begin{tabular}{c c | c c c c c c}

     & & \multicolumn{6}{c}{\(\rho\)} \\
                              & & $40$ & $55$ & $64$ & $80$ & $100$ & $128$ \\
 \hline
     \multirow{4}{*}{\(\lambda\)} & $55$ & $660$ & $825$ & $880$ & $1045$ & $1265$ & $1430$ \\
                              & $80$ & $960$ & $1200$ & $1280$ & $1520$ & $1840$ & $2080$ \\
                              & $100$ & $1200$ & $1500$ & $1600$ & $1900$ & $2300$ & $2600$ \\
                              & $128$ & $1536$ & $1920$ & $2048$ &
                              $2432$ & $2944$ & $3392$ \\

\end{tabular}
\label{table:parameters}
\end{table}

We remark that Sutherland's algorithm is less of a threat to unknown
order groups with trusted setup. For example, if there is an authority
that can be trusted to generate an RSA modulus $N = pq$ where $p$ and
$q$ are safe primes, then the order of $\Z_N^\times$ cannot be computed using Sutherland's approach.

\section{
    Ideal class groups as unknown-order groups
}

In this section we reconsider ideal class groups
as a source of trustless unknown-order groups.
We briefly recall the relevant background on class groups
in~\S\ref{sec:class-group-background};
detailed references include~\cite{Cohen:2010:CCA:1965598}
and~\cite{cox1989primes}.
We then reconsider class-group security
in~\S\ref{sec:class-group-security},
and give a new compression algorithm for class group elements
in~\S\ref{sec:compression}.

\subsection{Background on ideal and form class groups}
\label{sec:class-group-background}

An \textbf{imaginary quadratic field} is an algebraic extension
\[
K = \Q(\sqrt{d}) = \{a + b\sqrt{d} \, | \, a,b \in \Q\}
\]
where $d < 0$ is a square-free integer.
The discriminant $\Delta$ of $K$ is $d$ if $d \equiv 1 \pmod{4}$,
or $4d$ otherwise
(so $\Delta \equiv 0,1 \pmod 4$).
The ring of integers $\mathcal{O}_K$ is $\Z[\omega]$,
where $\omega = \frac{1}{2}(1+\sqrt{d})$ when $d \equiv 1 \pmod{4}$ and $\omega = \sqrt{d}$ otherwise.

The \textbf{ideal class group} is the quotient
$Cl(\mathcal{O}_K) = J_K/P_K$,
where $J_K$ is the (Abelian) group of non-zero fractional ideals of
$\mathcal{O}_K$, and $P_K < J_K$ is the subgroup of non-zero principal
fractional ideals.
In practice, we represent $Cl(\mathcal{O}_K)$
using the isomorphic \textbf{form class group} $Cl(\Delta)$
of binary quadratic forms of discriminant \(\Delta\)
(the discriminant of \(K\)).
We let $(a,b,c)$ denote the binary quadratic form
\[
    (a,b,c) = ax^2 + bxy + cy^2 \in \Z [x,y]
    \quad
    \text{with}
    \quad
    b^2 - 4ac = \Delta
    \,.
\]
We can represent this form
using only two coefficients $(a,b)$,
because $c$ is uniquely determined by $c = (b^2 - \Delta)/4a$.
A form \((a,b)\) is
\textbf{positive definite} if $a > 0$.
As with ideal classes,
there is an equivalence relation on quadratic forms:
$f$ and $g$ are equivalent if 
$f(x,y) = g(\alpha x + \beta y, \gamma x + \delta y)$
for some \(\alpha\), \(\beta\), \(\gamma\), and \(\delta\) in \(\Z\)
with \(\alpha\delta - \beta\gamma = 1\)
(that is, if they are in the same orbit under~\(\operatorname{SL}_2(\Z)\)).
Equivalent forms always
have the same discriminant.

We represent each equivalence class in $Cl(\Delta)$ using the unique
\textbf{reduced form} in the class.
A form $(a,b,c)$ is reduced if $|b| \le a \le c$; and if $|b| = a$ or $a = c$, then $b \ge 0$. 
Lagrange, and later Gauss and then Zagier, gave algorithms to find
the equivalent reduced form for any binary quadratic form.
The identity in $Cl(\Delta)$ is the equivalence class of the
form $(1,0,-k)$ if $\Delta = 4k$, or $(1,1,k)$ if $\Delta = 4k+1$.
The group law in $Cl(\Delta)$, known as composition of forms, is
due to Gauss;
this does not usually output a reduced form,
so reduction is an additional step.
We shall not give these algorithms here,
but refer the reader to~\cite{Cohen:2010:CCA:1965598}.

The order of $Cl(\mathcal{O}_K)$ is the
\textbf{class number} of $K$,
denoted $h(\Delta)$.
It follows from the Brauer--Siegel theorem
(see~\cite{10.1007/3-540-44448-3_18})
that for sufficiently large
negative discriminants,
on average the class number satisfies
\begin{equation}
    \label{eq:class-number-estimate}
    \log h(\Delta) \sim \log\sqrt{| \Delta |}
    \quad
    \text{as}
    \quad
    \Delta \to -\infty
\end{equation}
We can therefore conservatively assume
$\approx \frac{1}{2}\log_2{|\Delta}|$-bit group sizes for cryptographic-sized
negative discriminants.

The use of class groups in cryptography was first suggested by Buchmann
and Williams~\cite{Buchmann1988}.
Hafner and McCurley~\cite{Hafner1989ARS}
gave a sub-exponential \(L_{|\Delta|}(1/2)\) algorithm
for computing the order of $Cl(\Delta)$.
Thus,
the order of a class group $Cl(\Delta)$
of negative prime discriminant $\Delta \equiv 1 \pmod{4}$
is believed to be difficult to compute, if $\Delta$ is sufficiently large.
Lipmaa~\cite{2012/Lipmaa}
proposed that $Cl(\Delta)$ can be used as a group of unknown order
without trusted setup, simply by selecting a suitably large $\Delta$
and choosing an element in $Cl(\Delta)$ to be treated
as a generator
(it is not possible to know if it generates the whole of
$Cl(\Delta)$, or just a subgroup; we discuss this further below).

\subsection{The security of ideal class groups}
\label{sec:class-group-security}

Until now,
cryptographic class group parameters have mainly been proposed
with an eye to resisting subexponential algorithms
for computing orders of quadratic
imaginary class groups.
In this section we re-assess the security these parameters
in the light of Sutherland's algorithm,
and propose new (much larger) parameter sizes targeting the 128-bit
security level.

Hafner and McCurley gave their $L_{|\Delta|}(1/2)$ algorithm to compute the order of quadratic
imaginary class groups in 1989~\cite{Hafner1989ARS};
Buchmann extended this to compute the group structure and discrete logarithms.
See Biasse et al.~\cite{BJS10} for a more up-to-date evaluation of the security of ideal class groups.
The important thing to note is that
these algorithms all have the same subexponential complexity
$L_{|\Delta|}(1/2)$,
depending essentially on the size of $|\Delta|$.
In contrast, Sutherland's algorithm has exponential worst-case runtime,
but performs much faster with a non-negligible probability
depending on the structure of the class group---a factor that
Hafner--McCurley cannot exploit.
When computing the order of a random class group,
therefore, the small probability that Sutherland's algorithm outperforms
Hafner--McCurley must be taken into account.

The cryptographic parameter sizes in~\cite{10.1007/3-540-44448-3_18}
and~\cite{Buchmann01asurvey} both suppose that Hafner--McCurley
is the best known algorithm.
Concretely,
it is suggested that a $1665$-bit negative fundamental discriminant,
which means an approximately $833$-bit group order
(cf.~Eq.~\eqref{eq:class-number-estimate}),
should provide $128$-bit security.
Biasse, Jacobson Jr., and Silvester~\cite{BJS10}
improve on previous attacks and suggest $1827$-bit discriminants
(which implies \(\approx 914\)-bit orders) are needed to achieve 128-bit security.
These estimates have been quoted in more recent works,
including~\cite{cryptoeprint:2019:1229} which estimates that $1600$-bit
discriminants provide $120$-bit security, and
\cite{10.1007/978-3-030-26948-7_20} which proposes a slightly more
conservative discriminant size of 2048 bits for $128$-bit security.

Suppose we try to compute the order of a random class group
with $1827$-bit fundamental negative discriminant using Sutherland's algorithm.
Sutherland's algorithm has some important practical speedups
when specialized from generic groups to class groups---for example,
class group element negation is practically free,
so time and memory can be reduced by a factor of \(\sqrt{2}\)
(see~\cite[Remark 3.1]{sutherland2007order})---but these improvements do
not significantly impact security levels.
The performance of Sutherland's algorithm on a given quadratic imaginary class group
depends entirely on the class number.

Hamdy and M{\"o}ller~\cite{10.1007/3-540-44448-3_18} show that
imaginary class numbers
are more frequently smooth (although not significantly so) than
uniformly random integers of the same size.
We may therefore
conservatively approximate the smoothness probability of random class
group orders as being that of random integers.
With the results of~\S\ref{sec:sutherland},
the probability that a random class group with $1827$-bit fundamental
negative discriminant has less than 128 bits of
security ($u=7.1$) is at least $2^{-14.3}$, and the chance it has less
than $64$-bit security is $2^{-50}$. 
If a system is using a fixed class group as an accumulator,
then we need to ask if these probabilities of weakness are acceptable.
We claim that such groups do not satisfy Definition~\ref{def:lambda-rho} for $(\lambda,\rho) = (128, 128)$,
and so the security is weaker at these discriminant sizes than was previously thought.


Bach and Peralta~\cite{bach1996asymptotic} give $G(1/u, 2/u)$ for $u=20$
as $2.415504 \times 10^{-26} \approx 2^{-85}$. Thus, even for $85$-bit
security, we require $3400$-bit discriminants.
Using Banks and Shparlinski~\cite{banks2006integers}'s method of approximating $G(1/u,2/u)$,
we estimate that for $100$-bit security with respect to Sutherland's algorithm,
a discriminant of around $4500$ bits would be required.
For 128-bit security, $u=26.5$ gives $G(1/u, 2/u) \approx 2^{-128}$,
which implies a group order $N\approx  2^{128\times26.5} = 2^{3392}$,
and hence we estimate that $\Delta$ should be approximately $6784$ bits.
We emphasize that $G(1/u,2/u)$ is only a lower bound for the success
probability of Sutherland's algorithm,
but this should still serve at least as a guideline.

\subsection{Compressing ideal class group elements}
\label{sec:compression}

Bleichenbacher~\cite{Ble04} proposed a beautiful algorithm to compress
Rabin signatures. His method can also be used to compress elements
in ideal class groups. As far as we know, this simple observation has
not yet been made in the literature.

\subsubsection{Bleichenbacher's Rabin signature compression algorithm}
\label{sec:Bleichenbacher}

A Rabin signature on a message $m$ under the public key $N$
(an RSA modulus)
is an integer $\sigma$ such that
\[
    \sigma^2 \equiv m \pmod{N}
    \,.
\]
Normally $\sigma$ is the same size as $N$,
but Bleichenbacher showed how to bring this down to $\sqrt{N}$.
The continued fraction algorithm (or the Euclidean algorithm)
can be used to compute integers $s$ and $t$ with $0 \leq s < \sqrt{N}$ and $|t| \le \sqrt{N}$
such that $\sigma t \equiv s \pmod{N}$:
see Algorithm~\ref{alg:PartialXGCD} and
Lemma~\ref{lemma:PartialXGCD-correctness} below.
The compressed signature is \(t\).
Given \(m\) and \(t\), let \(x = mt^2 \bmod{N}\);
then \(s^2 \equiv x \pmod{N}\),
but $0 \leq s < \sqrt{N}$, so we can recover $s$ from $m$ and $t$
by taking the integer square root of \(x\);
and then it is trivial to recover $\sigma \equiv {s/t} \pmod{N}$
(note that if \(t\) is not invertible modulo \(N\),
then we have found a factor of \(N\),
and the signature scheme is broken).
We may therefore replace $\sigma$ with $t$,
which has half the bitlength of \(\sigma\).

\begin{algorithm}[ht]
    \caption{\texttt{PartialXGCD}.}
    \label{alg:PartialXGCD}
    \KwIn{Integers \(a > b > 0\)}
    \KwOut{
        Integers \(s\) in \([0,\sqrt{a})\)
        and \(t\) in \([-\sqrt{a},\sqrt{a}]\)
        such that \(s \equiv bt \pmod{a}\)
    }
    \( (s, s', t, t', u, u') \gets (b, a, 1, 0, 0, 1) \)
    \;
    \While(\tcp*[f]{Invariants: \(0 \le s < s'\), \(|u| \leq |u'|\), \(|t'| \leq |t|\), \(s = au + bt\), \(s' = au' + bt'\)}){\(s \geq \sqrt{a}\)}{
        \( q \gets s' \operatorname{div} s \)
        \tcp*{Euclidean division without remainder}
        \(
            (s, s', t, t', u, u')
            \gets
            (s' - qs, s, t' - qt, t, u' - qu, u)
        \)
        \;
    }
    \Return{\((s,t)\)}
    \;
\end{algorithm}

\begin{lemma}
    \label{lemma:PartialXGCD-correctness}
    Given integers \(a > b > 0\),
    Algorithm~\ref{alg:PartialXGCD}
    returns \((s,t)\)
    such that \(s \equiv bt \pmod{a}\),
    \( 0 \leq s < \sqrt{a} \), and
    \(0 < |t| \le \sqrt{a}\).
\end{lemma}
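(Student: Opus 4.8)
The plan is to prove the lemma by induction on the number of completed loop iterations, maintaining a short list of invariants, and then reading the three claimed properties off the state at which the loop exits. Writing $(s,s',t,t',u,u')$ for the loop variables, I would first verify that the following hold every time the loop condition is tested: (i) the B\'ezout identities $s = au + bt$ and $s' = au' + bt'$ of the algorithm's comment, which in particular give $s \equiv bt \pmod{a}$; (ii) $0 \le s < s'$, since a single iteration replaces $s$ by the Euclidean remainder $s'-qs \in [0,s)$ and replaces $s'$ by the old $s$; (iii) $|s't - st'| = a$, because this quantity starts at $a\cdot 1 - b\cdot 0 = a$ and the update $(s,s',t,t') \mapsto (s'-qs,\,s,\,t'-qt,\,t)$ negates it; and (iv) $t\,t' \le 0$ and $|t| \ge 1$, both of which are preserved by the update $t \mapsto t'-qt$ with $q \ge 1$ (the new value has sign opposite to the old $t$, and absolute value $|t'| + q|t| \ge |t| \ge 1$). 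Termination is immediate: the successive values of $s$ form the strictly decreasing remainder sequence of the Euclidean algorithm on $(a,b)$, so eventually $s < \sqrt{a}$; and since $a \ge 2$ we have $\sqrt{a} > 1$, so the loop exits before $s$ can reach $0$, and the division $s' \operatorname{div} s$ is never by zero.

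Then I would extract the conclusion from the terminal state. The congruence $s \equiv bt \pmod{a}$ is invariant~(i); the bound $0 \le s < \sqrt{a}$ is invariant~(ii) together with the negated loop guard; and $0 < |t|$ is the bound $|t| \ge 1$ from invariant~(iv). It remains to show $|t| \le \sqrt{a}$. If the loop body never executed, then $t = 1 \le \sqrt{a}$. Otherwise, on the final iteration the pre-update value of $s$ satisfied $s \ge \sqrt{a}$, and since that value becomes the post-update $s'$, we have $s' \ge \sqrt{a}$ at exit; invariant~(iii) now gives $a = |s't - st'|$, and since $s, s' \ge 0$ while $t\,t' \le 0$, the terms $s't$ and $st'$ have opposite signs, so $a = |s't| + |st'| \ge |s't| = s'|t| \ge \sqrt{a}\,|t|$, whence $|t| \le \sqrt{a}$.

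The one step needing a little care is the bound $|t| \le \sqrt{a}$: tracking the remainder sequence alone is not enough, and one needs the continued-fraction-style identity $|s't - st'| = a$ of invariant~(iii) together with the observation that the remainder produced by the penultimate iteration, now stored in $s'$, is still at least $\sqrt{a}$. Everything else is routine bookkeeping.
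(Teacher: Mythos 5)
Your proof is correct and follows essentially the same route as the paper's: both treat the algorithm as a truncated extended Euclidean algorithm, maintain the B\'ezout and monotonicity invariants, read off \(s \equiv bt \pmod a\) and \(0 \le s < \sqrt a\) from the state at exit, and derive \(|t| \le \sqrt a\) from a bound of the form \(|s't| \le a\) combined with \(s' \ge \sqrt a\) at termination. The only difference is that where the paper cites the invariant \(|s't| \le a\) from the literature, you prove it from scratch via the determinant identity \(|s't - st'| = a\) and the sign alternation \(tt' \le 0\), which makes your argument self-contained.
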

\begin{proof}
    Algorithm~\ref{alg:PartialXGCD} is a truncated version of the
    extended Euclidean algorithm,
    stopping when \(s < \sqrt{a}\)
    rather than \(s = 0\).
    The invariants
    \(s' > s \ge 0\), \(|u'| \geq |u|\), \(|t'| \leq |t|\),
    \(s = au + bt\), and \(s' = au' + bt'\)
    are easily verified;
    in particular, \(s \equiv bt \pmod{a}\).
    From \(|t'| \leq |t|\) and the fact that $t$ is initialized
    to $1$, we immediately get that $t \neq 0$.
    Another invariant,
    \(|s't| \le a\),
    is proven in~\cite[Lemma 2.3.3]{Galbraith:2012:MPK:2230462}.
    Since \(s\) takes a sequence of strictly decreasing values,
    at some point \(0 \le s < \sqrt{a}\) and \(s' \geq \sqrt{a}\);
    this is where the loop terminates.
    It remains to show that at this point,
    we also have
    \(|t| \le \sqrt{a}\):
    but this follows from the invariant \(|s't| \le a\)
    and \(s' \geq \sqrt{a}\).
\end{proof}

\subsubsection{An improved class group element compression algorithm}

Suppose we have a reduced form \((a,b,c)\) in \(Cl(\Delta)\).
Since $b^2 - 4ac = \Delta$ is a known constant, it suffices to store $(a,b)$.
Since the form is reduced,
we have $|b| \le a  < \sqrt{| \Delta |}$,
so the pair $(a,b)$ can be encoded in approximately $\log_2( | \Delta |)$ bits:
this is the traditional ``compressed'' representation of a class group element.

But we can do better.
Since \(b^2 - 4ac = \Delta\),
we have
\[
   b^2 \equiv \Delta \pmod{a}
\]
---a relation reminiscent of the Rabin signature verification equation.
The situation is not exactly the same---\(a\) is not an RSA modulus,
and \(b\) is in \((-a,a]\) rather than \([0,a)\)---but it is not difficult to adapt
signature compression to class group element compression,
encoding the coefficient \(b\) in half the space.

First, we reduce to the case where \(b \ge 0\):
we store the sign of \(b\) as \(\epsilon = 1\) if \(b < 0\), and \(0\)
otherwise,
and replace \(b\) with \(|b|\).
We will treat the special cases \(a = b\) and \(b = 0\)
later;
in the meantime, we may suppose that \(0 < b < a\).
Using Algorithm~\ref{alg:PartialXGCD},
we compute integers $s$ and $t$
such that $bt \equiv s \pmod{a}$, $0 \leq s < \sqrt{a}$, and $|t| \leq \sqrt{a}$;
then
\[
  s^2 \equiv \Delta t^2 \pmod{a}
  \,.
\]
Given $a$ and $t$, we can compute $x = \Delta t^2 \bmod{a}$,
and then \(x = s^2\) \emph{as integers} because $0 \le s^2 < a$;
so $s$ can be recovered as the exact (positive) integer square root.
Now \(bt \equiv s \pmod{a}\),
and the Bleichenbacher approach suggests
compressing \(b\) to \(t\) and
recovering \(b\) as \(s/t \pmod{a}\);
but since \(a\) is not an RSA modulus,
we may---and often do---have $\gcd(a,t) \not= 1$,
and then \(t\) cannot be inverted modulo~\(a\).

To fix this, we compress \((a,b)\)
to \((a',g,t',b_0,\epsilon)\),
where $g = \gcd(a,t)$, $a' = a/g$, $t' = t/g$, $b_0 = |b| \mod{f}$.
Here, $f \geq g$ is the smallest integer such that $\lcm(f, a') \geq a$,
and \(\epsilon\) and \(t\) are defined as above.
The reason for this choice of $f$ is that since $b$ is an integer
satisfying $0 < b < a$, it is not necessary to compute $b$ by computing
$b \pmod{a}$. Instead, we can recover $b \pmod{N}$ for any integer
$N \ge a$. Here we use $N = \lcm( f, a')$, where we ensure $N \geq a$,
and this $f$ is deterministically computable by the decompression
algorithm. This avoids a failure to recover $b$ uniquely, in the case
that $a'$ and $g$ share common factors, if we had simply used $f = g$.
Roughly, because two random numbers are coprime with probability
$6/\pi^2$ \cite{CJ88}, then we expect $f$ to only be a small
additive factor higher than $g$, and therefore that $\log_2{f} \approx \log_2{g}$.
Concretely, we compressed more than 30 million random class group
elements of 3845-bit discriminants using an implementation in python,
and found that $f-g$ had an average value of $0.1234$, and a maximum value
of $16$, after all these attempts. These numbers did not appear to grow
with the size of $\Delta$.

To decompress,
we compute $a = a'g$, $t = t'g$,
\(x = t^2\Delta \bmod{a}\),
and \(s = \sqrt{x}\).
Let $b' \equiv s'/t' \pmod{a'}$,
where $s' = s/g$; note that $s$ is always divisible by $g$ since
both $a$ and $t$ are, via the invariant $s = au + bt$
in Algorithm~\ref{alg:PartialXGCD}.
Then $b' \equiv b \pmod{a'}$,
and we can compute $b$ uniquely
from $b \equiv b_0 \pmod{f}$ and $b \equiv b' \pmod{a'}$
using the Chinese Remainder Theorem, since $b < \lcm(f,a')$.
If \(\epsilon = 1\) then we correct the sign,
replacing \(b\) with \(-b\);
and after computing \(c\) (if required) as \((b^2 - \Delta)/(4a)\),
we are done.

For the special case \(a = b\),
we exceptionally let \(t = 0\);
this is not ambiguous, because \(t = 0\)
cannot occur in any other case.
The compressed form is then \((a',g,t',b_0,\epsilon) = (1,a,0,0,0)\).
When $b = 0$,
we compress to \((a,0,0,0,0)\).
Again, this is unambiguous: no other element of \(Cl(\Delta)\)
compresses to this value.

Algorithms~\ref{alg:compress} and~\ref{alg:decompress}
make the compression and decompression procedures completely explicit.
Note that
\begin{align*}
    \log_2 a' + \log_2 g
    & =
    \log_2 a \approx \log_2\sqrt{|\Delta|}
    \shortintertext{and}
    \log_2 t' + \log_2 b_0
    & \le
    \log_2 t' + \log_2 f
    \approx
    \log_2 t \approx \tfrac{1}{2}\log_2\sqrt{|\Delta|}
    \,.
\end{align*}
Algorithm~\ref{alg:compress} therefore compresses the form $(a,b,c)$ to a
$\tfrac{3}{4}\log_2|\Delta|$-bit representation,
three-quarters of the size of the traditional $(a,b)$.
When a party receives a compressed group element it is necessary for them to execute Algorithm~\ref{alg:decompress} before performing group operations on the element.

\begin{algorithm}[ht]
    \caption{Element compression for $Cl(\Delta)$}
    \label{alg:compress}
    \KwIn{A reduced form \((a,b,c)\) in \(Cl(\Delta)\)
    (\(c\) may be omitted)}
    \KwOut{A compressed form $(a',g,t',b_0,\epsilon)$}
    \lIf{\(b = 0\)}{%
        \Return{\((a,0,0,0,0)\)}
    }
    \lIf{$a = b$}{%
        \Return{\((1,a,0,0,0)\)}
    }
    \(
        \epsilon
        \gets
        \begin{cases}
            1 & \text{if } b < 0 \\
            0 & \text{otherwise}
        \end{cases}
    \)
    \;
    \(b \gets |b|\)
    \;
    $(s,t) \gets \texttt{PartialXGCD}(a, b)$
    \tcp*{Now $s \equiv bt \pmod{a}$, with $0 \le s < \sqrt{a}$}
    $g \gets \gcd(a,t)$
    \;
    $(a',t') \gets (a/g,t/g)$
    \;
    $f \gets |g|$
    \;
    \While{$\lcm(f,a') < a$}{
        $f \gets f + 1$
    }
    $b_0 \gets b \bmod f$
    \;
    %
    %
    \Return{$(a',g,t',b_0,\epsilon)$}
    \;
\end{algorithm}

\begin{algorithm}[ht]
    \caption{Element decompression for $Cl(\Delta)$}
    \label{alg:decompress}
    \KwIn{A compressed form $(a',g,t',b_0,\epsilon)$ and $\Delta$}
    \KwOut{A reduced form \((a,b,c)\) in \(Cl(\Delta)\) (\(c\) may be omitted)}
    \lIf{\((g,t',b_0,\epsilon) = (0,0,0,0)\)}{%
        \Return{\((a',0,-\Delta/(4a'))\)}
    }
    \lIf{$t' = 0$}{%
        \Return{$(g,g,(g^2-\Delta)/(4g))$}
    }
    $(a,t) \gets (g\cdot a', g\cdot t')$
    \;
    $x \gets t^2\Delta \mod a$
    \;
    %
    %
    $s \gets \sqrt{x}$
    \tcp*{Integer square root}
    $s' \gets s/g$
    \tcp*{Exact integer division}
    $b' \gets s'\cdot t^{-1} \pmod{a'}$
    \;
    $f \gets |g|$
    \;
    \While{$\lcm(f,a') < a$}{
        $f \gets f + 1$
    }
    $b \gets \texttt{CRT}((b',a'),(b_0,f))$
    \tcp*{\(b \equiv b' \pmod{a'}\) and \(b \equiv b_0 \pmod{f}\)}
    \If{$\epsilon = 1$}{%
        $b \gets -b$
        \;
    }
    \Return{$(a,b,(b^2-\Delta))$}
\end{algorithm}

\section{
    Hyperelliptic Jacobians as unknown-order groups
}
\label{sec:concrete}

We now revisit Brent's proposal of hyperelliptic Jacobians
as a concrete source of unknown-order groups,
focusing on genus \(g = 3\).
Hyperelliptic Jacobians can be seen as the ideal class groups
of quadratic function fields.
We will argue that even despite the existence of theoretical polynomial-time
point-counting algorithms, these Jacobians may still present
a more efficient alternative to class groups at the same security levels.

\subsection{Hyperelliptic curves}

We briefly recall the relevant background here;
details can be found in~\cite{menezes1996elementary}
and~\cite{Galbraith:2012:MPK:2230462}.
The reader familiar with hyperelliptic curves may skip this section.

Let \(q\) be an odd prime power.
A hyperelliptic curve $C$ of genus $g > 1$ over \(\F_q\)
is defined by an equation $y^2 = f(x)$,
where $f$ is a monic, squarefree polynomial of degree $2g+1$ over $k$.\footnote{%
    We do not need the more general case of nonsingular curves $y^2 + h(x)y = f(x)$,
    where $\deg f = 2g + 1$ or $2g + 2$, and $\deg h \le g$.
}
A (finite) point on \(C\) is a tuple \((x,y)\) in \(\Fbar_q^2\) satisfying the
defining equation of \(C\);
there is also a unique point at infinity, denoted \(\infty\).
%



A \textbf{divisor} on $C$ is a formal sum of points $D = \sum m_P P$
where $m_P = 0$ for all but finitely many~$P$.
The degree of a divisor is $\deg{D} = \sum m_P$.
The divisors form a group
$\Div(C)$,
and the divisors of degree zero form a proper subgroup $\Div^0(C)$ of $\Div(C)$.
We let
$\mathcal{P}(C)$ denote the set of principal divisors of $C$:
that is,
divisors of the form $(\gamma) = \sum_{P \in C} \operatorname{ord}_P(\gamma) P$
for some $\gamma$
in the function field $\Fbar_q(C) = \Fbar_q(x)[y]/(y^2 - f(x))$
(here $\operatorname{ord}_P(\gamma)$ is the order of vanishing of
$\gamma$ at $P$).
Principal divisors have degree 0,
so $\mathcal{P}(C) \subset \Div^0(C)$;
the \textbf{Jacobian} is the quotient group
\begin{equation}
    \Jac{C} \cong \Div^0(C)/\mathcal{P}(C)
\end{equation}
(also known as the degree-0 Picard group, denoted by $\text{Pic}^0(C)$).

We compute with elements of $\Jac{C}$
using the Mumford representation~\cite{mumford}.
Each divisor class contains a unique \emph{reduced}
divisor in the form $P_1 + \cdots + P_r - r\infty$
(with the $P_i$ not necessarily distinct)
with $r \le g$ and $P_i \not= \tilde{P_j}$ for all $i \not= j$.
The reduced divisors correspond to pairs of polynomials
$\langle u(x), v(x) \rangle$, where $u$ is monic, $\deg{v} < \deg{u} \leq g$
and $v^2 \equiv f \pmod{u}$.
The roots of $u(x)$ are the $x$-coordinates of the points in the support of the divisor.
The divisor classes defined over \(\F_q\)---that is,
such that \(u\) and \(v\) have coefficients in \(\F_q\)---form a finite group,
denoted \(J_C(\F_q)\).
The Hasse--Weil bound tells us that
$\#\Jac{C} \sim q^g$; more precisely,
$$
    (\sqrt{q} - 1)^{2g} \leq \# \Jac{C}(\F_q) \leq (\sqrt{q} + 1)^{2g}
    \,.
$$

The group law on $\Jac{C}(\F_q)$
can be computed using Cantor's algorithm~\cite{cantor1987computing}
(see also \cite{Costello:2011:GLC:2186839.2186848}).
Efficient explicit formulae exist for $g = 2$ (see~\cite{Lange2005})
and
$g = 3$ (see~\cite{fan2007efficient}).


\subsection{The security of hyperelliptic Jacobians}
\label{sec:hyperelliptic-security}

Let $C$ be a hyperelliptic curve of genus $g$ over $\F_q$,
where $q = {p^n}$,
and \(\Jac{C}\) its Jacobian.
Recall that \(\#\Jac{C}(\F_q) \approx q^3\).
For \(\Jac{C}(\F_q)\) to be useful as an unknown-order group,
calculating \(\#\Jac{C}(\F_q)\) should be infeasible.
Besides generic algorithms,
two classes of algorithms specific to hyperelliptic Jacobians
are relevant here:
point-counting
and discrete-log algorithms.

As a base-line,
$C$ must be chosen
such that $\Jac{C}(\F_q)$ resists Sutherland's algorithm
with acceptable probability, as in~\S\ref{sec:sutherland}.
Sutherland's algorithm
has some important practical optimizations when specialized to
hyperelliptic Jacobians---for example,
we can exploit the fact that negation is effectively free
to decrease storage and runtime by a factor of up to $\sqrt{2}$
(see~\cite[Remark 3.1]{sutherland2007order}),
and even if a Jacobian is not directly vulnerable to Sutherland's algorithm,
its order may be deduced from that of vulnerable twists,
as in~\cite{sutherland2009generic}---but
these improvements do not significantly impact security levels.

To reach acceptable security levels against Sutherland's algorithm using genus 3 curves
then \(q\) must be subexponentially large.
Looking at Table~\ref{table:parameters},
the cautious \((\lambda,\rho) = (128,55)\) level
requires 1920-bit groups, or \(q \approx 2^{640}\);
the paranoid \((128,128)\) level
requires 3392-bit groups, or \(q \approx 2^{1131}\).
Fields of this size also address the concerns of Lee~\cite{cryptoeprint:2020:289}.


\subsubsection{Point-counting algorithms}
\label{sec:point-counting}

Computing $\#\Jac{C}(\F_q)$
is a classic problem (called ``point counting'') in algorithmic
number theory: the goal is to compute the zeta function of $C$,
from which we immediately get $\#\Jac{C}(\F_q)$.
The many dedicated point-counting algorithms fall naturally into two broad classes:
\(p\)-adic algorithms
and \(\ell\)-adic ``\emph{Schoof-type}'' algorithms.
The \(p\)-adic algorithms
(notably Kedlaya's
algorithm~\cite{kedlaya2001counting}
and its descendants~\cite{10.1093/imrn/rnm095})
have complexity polynomial with respect to \(g\) and \(n\),
but exponential in \(\log p\).
Taking \(q = p\), we can ignore these algorithms.

\hide{
The $p$-adic point-counting algorithms,
most notably Kedlaya's algorithm~\cite{kedlaya2001counting} and its descendants,
generally compute the action of Frobenius on some $p$-adic cohomology
group connected with $C$.
Kedlaya's original algorithm computes the zeta function of $C$
in time \(\softO(pg^4n^3)\).
Harvey~\cite{10.1093/imrn/rnm095} has reduced the dependence on $p$ to $p^{1/2}$.
For fixed, or very small $p$ (polynomial in $n$ and $g$),
this is polynomial-time;
in practice, this is highly efficient for $p$ into the thousands.
However, as $p$ grows larger,
these algorithms become completely impractical.
If we focus on Jacobians of curves $C$ over $\F_{p^n}$
with $p$ large---and especially, on curves over $\F_p$---then
we can safely ignore the threat of any Kedlaya-style algorithm.
}

Schoof-type algorithms
compute $\#\Jac{C}(\F_q)$
in polynomial time \emph{for fixed $g$}.
Indeed, from a theoretical point of view,
the existence of Schoof-type algorithms
may make the use of hyperelliptic Jacobians
as unknown-order groups seem perverse.
But Schoof-type algorithms are totally impractical
over large prime fields, even in genus as small as 3.
To understand why, we need to look at how they operate.

First, consider the case of elliptic curves ($g = 1$).
Schoof's ground-breaking \(\softO(\log^5q)\)
algorithm~\cite{schoof1985elliptic},
the first polynomial-time point-counting algorithm for elliptic
curves,
computes the characteristic polynomial of Frobenius
for a series of small prime $\ell$,
using polynomial arithmetic modulo the division polynomials $\Psi_\ell$,
before combining the results with the Chinese Remainder Theorem
to compute the zeta function.\footnote{
    Recall that if $E$ is an elliptic curve,
    then for all positive integers $n$
    there exists a division polynomial~$\Psi_n(X)$
    such that $\Psi_n(x(P)) = 0$ if and only if $P \not= 0$ is in the
    $n$-torsion of $E$.
    For odd prime \(\ell\),
    the degree of \(\Psi_\ell\) is \((\ell-1)/2\).
}
Its successor, the Schoof--Elkies--Atkin (SEA) algorithm~\cite{schoof1995},
runs in time \(\softO(\log^4q)\),
and has made elliptic-curve point counting routine.

Pila generalised Schoof's algorithm to higher-dimensional Abelian
varieties~\cite{pila1990frobenius}, including all Jacobians of curves.
Pila's algorithm is polynomial-time in $p$ and $n$,
but badly exponential in~$g$;
as far as we know, it has never been implemented.
The task gets a little simpler
when we specialize from general Abelian varieties
to hyperelliptic Jacobians.
The crucial objects are
the analogue of the division polynomials:
these are multivariate \emph{division ideals} vanishing on coordinates of points in torsion subgroups
(in this sense, \(\Psi_\ell\)
generates the \(\ell\)-division ideal in genus \(g = 1\)).
Cantor constructs generators for the $\ell$-division ideal
in~\cite{Cantor1994} (see also~\cite{cohen2005handbook});
in genus $3$,
the degrees of Cantor's $\ell$-division polynomials are bounded by $O(\ell^2)$
(see~\cite{abelard2018counting}).

Schoof-type point-counting is already challenging in genus~2.
Several genus-2 algorithms have been implemented and analysed,
beginning with Gaudry and Harley~\cite{10.1007/10722028_18}
and Gaudry and Schost~\cite{10.1007/978-3-540-24676-3_15}.
Pitcher's PhD thesis~\cite{pitcher} gave a genus-2
algorithm with complexity $O((\log q)^7)$.
Gaudry and Schost~\cite{gaudry2012genus} used an improved algorithm,
with a mixture of Pitcher's approach and exponential birthday-paradox algorithms,
to find a curve of secure order over the 127-bit Mersenne prime field~$\F_{2^{127}-1}$.
In their experiments, they claimed around 1,000 CPU hours on average
to compute the order of a random genus-2 curve over this 127-bit field.
Computing $\ell$-division ideals and analysing the action of
Frobenius on them
can become impractical for even moderately small $\ell$:
the computations mentioned above, with an 8 GB limit on RAM, used primes~\(\ell \le 31\)
(the earlier~\cite{10.1007/978-3-540-24676-3_15} used~\(\ell\) up to~\(19\)).
They also used small prime powers \(\ell^k = 2^{16}\), \(3^6\), \(5^4\), and \(7^2\).
These \(\ell\) and \(\ell^k\) are not sufficient to determine the group order;
to finish the order computation, they used one- or two-dimensional
random walks (a low-memory square-root algorithm:
see~\cite{2004/Gaudry--Schost:MCT} for details).
We underline the fact that finishing this point-counting computation is a situation where
in practice, an exponential algorithm is more practical than a polynomial-time one!

\hide{
Steven: I agree with the below.
\TODO[Ben: I've changed this to ``not sufficient to determine the group
order'', because it's kind of complicated.  What they're actually trying
to compute is \((s_1,s_2)\), where \(s_1\) is a 64-bit number and
\(s_2\) is a 127-bit number.  The primes \(11\) through \(31\) give them
about 31 bits of information on \((s_1,s_2)\), and then the small prime
powers can give them another 49 bits.  Best case, they've determined
\(s_1\) and they're left with a chunk of \(s_2\) to find using an
exponential random-walk algorithm.
Worst case, there are still a few bits of \(s_1\) left and they need a
two-dimensional variant of the random-walk algorithm.  But I think it's
tricky to say that they've determined, say, 80 bits of the group order
because it's more like they've determined 80 bits of \(s_2\): there are
only about 50 bits left to go until they've determined the whole group
order, rather than 175 bits or something, which is what it would be if
the group order were a random 256-bit number and not something drawn
from the Hasse--Weil interval...  TL;DR: I think it's complicated, and
I'm not sure we want to go into that much detail right here.]
}

We have found no practical work for general genus-2 curves going beyond \(\ell = 31\) in the literature.
Abelard's PhD thesis~\cite{abelard2018counting} discusses the
feasibility of continuing with larger~\(\ell\).
With time complexity $\softO(\ell^6 \log q)$
and space complexity $\softO(\ell^4 \log q)$,
running time becomes
more of an issue than memory: for 192-bit \(q\),
the computation for $\ell = 53$ could take around 1000 CPU-days,
yet still leave a search space of $\approx 2^{95}$ elements
in the exponential ``collision'' step of the algorithm.

This practical work has not been extended to genus~3.
The main obstruction is the complexity of computing with division ideals.
Some theoretical analysis and projected
complexities appear in~\cite{AGS-FOCM}.
Some first steps have been made for the
very special class of genus-3 Jacobians with known and efficiently
computable real multiplication endomorphisms in~\cite{AGS-ANTS},
following the analogous genus-2 algorithm
in~\cite{Gaudry--Kohel--Smith},
but this approach does not apply to general genus-3 hyperelliptic Jacobians.

Concretely,
taking $q \sim 2^{100}$ in genus $3$ would appear sufficient to resist
point counting on most curves \(C\),
and result in \(\#\Jac{C}(\F_q) \sim 2^{300}\);
the much larger group and field sizes required to resist Sutherland's
algorithm
render point counting irrelevant as an attack.

The upshot is that while point-counting for \emph{fixed} genus $g > 2$
is polynomial-time in theory,
it remains impractical---even infeasible---in the real world.
This is already true of the relatively small field sizes
relevant to discrete-logarithm-based cryptography;
it is even more so for the much larger, subexponential-sized fields
required to protect against Sutherland's algorithm in the unknown-order setting.

\hide{
Assumption~\ref{assum:infeasible}
formalizes the idea of the infeasibility of computing
vanishing-points of \(\ell\)-division ideals for a random genus-3
hyperelliptic curve over suitably large prime finite fields.
This implies that there is no feasible Schoof-type algorithm for
genus-3 curves over sufficiently large prime fields.

\begin{assum}\label{assum:infeasible}
    For all \(\lambda \in \mathbb{N}\),
    there exists a polynomial \(p(\lambda)\)
    such that
    if $p > 2^{p(\lambda)}$ and \(\ell > \log(p)\) are prime,
    and $C$ is a random hyperelliptic genus-3 curve over~$\F_p$,
    then with overwhelming probability, computing an element of order $\ell$
    in~$\Jac{C}(\F_p)$ or $\Jac{C}(\overline{\F}_p)$ requires at least $2^{\lambda}$ operations in $\F_p$.
\end{assum}

Assumption~\ref{assum:infeasible} features an explicit lower bound $\log(p)$ on the primes $\ell$.
However, for the rest of the paper we relax this to specific values $\bound$.
For example, for 128-bit security level we will argue that $\bound = 60$ is reasonable.
}

\begin{remark}
    Some work has been done on generating genus-2 and genus-3 Jacobians
    \emph{with a known number of points} using CM theory,
    for applications in DLP-based cryptography,
    notably by Weng~\cite{weng3,weng2003constructing}
    (see also e.g.~\cite{gaudry2012genus} and
    \cite{hess2001two}).
    Obviously, these curve-generation methods must be avoided for unknown-order
    applications.
\end{remark}

\begin{remark}
    One might hope that progress in computing higher-genus
    modular polynomials might yield a SEA analogue
    improving substantially on pure Schoof-style point
    counting.
    However,
    any SEA analogue in genus \(g > 3\)
    would actually be \emph{slower} than pure Schoof.
    Indeed,
    the number of isogenies splitting \([\ell]\)
    (and hence the degree of the ideal
    that a SEA analogue would use at the prime \(\ell\))
    is in \(O(\ell^{g(g+1)/2})\);
    this exceeds the degree of the \(\ell\)-division ideal,
    which is in \(O(\ell^{2g})\).
    Even for \(g = 3\),
    the asymptotic complexity of SEA is no better than that of Schoof.
\end{remark}

\subsubsection{Discrete logarithm algorithms}
\label{sec:discrete-logs}

If the DLP can be efficiently solved
in a subgroup
\(\langle{G}\rangle\subset\Jac{C}(\F_q)\),
the order of \(\langle{G}\rangle\)
can also be efficiently computed: if $xG =
\mathcal{O}$, where
is the cryptographic subgroup,
then $x$ is (a multiple of) the order.
Suppose, then, that we want to solve the DLP in $\Jac{C}(\F_q)$,
where $C$ is a curve of genus $g$ over $\F_q$.
Gaudry et al.~\cite{10.2307/40234388}
and Nagao~\cite{cryptoeprint:2004:161}
present algorithms for small $g$ running in time $\softO(q^{2-2/g})$,
improving on the $O(q^2)$ algorithm of~\cite{10.1007/3-540-45539-6_2},
and the single-large-prime algorithm of~\cite{10.1007/978-3-540-40061-5_5}.
This has better performance for genus 3 than square-root algorithms like
Pollard Rho,
which has expected runtime in $\softO(q^{3/2})$;
but in genus~2, Pollard Rho is more efficient, in $\softO(q)$.
Avanzi, Thériault, and Wang~\cite{avanzi2008rethinking} further
discuss security in these cases.

Smith~\cite{Smith2009} gives a method of transferring the DLP from
hyperelliptic to non-hyperelliptic genus-3 Jacobians that applies to
18.57\% of genus-3 curves;
Diem's algorithm~\cite{10.1007/11792086_38} can then be used to solve
the DLP in time $\softO(q)$. Laine and Lauter~\cite{laine2015time}
examine and improve on Diem's attack (including analysis of the
logarithmic factors, which they estimate to be $O(\log^2q)$),
but the memory requirement for their attack is high at $\softO(q^{3/4})$.
The practical results from \cite{laine2015time} suggest that
even for \(q \sim 2^{100}\),
discrete logarithms require around $2^{113}$ field multiplications
and $1.2 \times 10^{14} $TB of memory, assuming the
reduction of \cite{Smith2009} applies; if not, the algorithm
of~\cite{10.2307/40234388} would require on the order of $2^{133}$
operations.
Genus-3 hyperelliptic curves avoiding isogeny-based attacks
are constructed in~\cite{laine2015security}.

As $g$ tends to infinity,
there exist subexponential 
attacks on the DLP
using index calculus 
(for example, \cite{Enge:2002:CDL:589604.589619});
but these have no impact for fixed genus~2 and~3.

\hide{
\subsubsection{The impact of Sutherland's algorithm}
The results in~\S\ref{sec:sutherland}
give some success probabilities for computing \(\#\Jac{C}(\F_q)\)
for randomly chosen genus-3 curves \(C\) over $\F_q$
with Sutherland's algorithm,
given different running times.
For example, given $O(q)$ time (corresponding to $u=3.0$),
we expect to find \(\#\Jac{C}(\F_q)\) with probability $0.4473$.
But even for $O(\sqrt{q})$ time ($u = 6.0$),
we expect to find \(\#\Jac{C}(\F_q)\) with probability at least $0.001092$.
This chance of success, while small at first glance, is still
significant
for unknown-order applications:
at least 1 in every 1000 randomly-generated curves falls to Sutherland's algorithm in $O(\sqrt{q})$ time.
We can also exploit the fact that
even if a curve is not directly vulnerable, 
it may be vulnerable through related curves such as its twists
(this fact was used by Sutherland in \cite{sutherland2009generic}).
In order to resist these algorithms and make the probability of
a random order being weak exponentially small, \(q\) must be made subexponentially large.

We find that Sutherland's algorithm overshadows point-counting and
discrete-log attacks when it comes to choosing concrete parameters for the
unknown-order setting.
Indeed, it forces us into parameter choices that make even higher-genus
curves less risky.
It is conservative to assume an $\softO(q)$ algorithm for finding the
order of any Jacobian in genus \(\ge 3\).
But to keep the probability of generating groups with semismooth order
(vulnerable to Sutherland's algorithm)
down to an acceptable level,
$q$ must be made subexponentially large. At this point,
subexponential attacks on the DLP in higher genus become
insignificant---even an $O(q)$ attack on the DLP in any genus would be irrelevant.
By using parameters of this size, we also avoid all criticisms of Lee~\cite{cryptoeprint:2020:289}.
}

\hide{
\subsubsection{Avoiding special curves}
\label{sec:special-curves}

Previous work on generating hyperelliptic curves for cryptography
focused on generating Jacobians 
that avoid known DLP attacks.
For example, the order should have a large prime factor,
to avoid Pohlig--Hellman;
the largest prime factor should not divide $q^k - 1$ for small $k$,
to avoid MOV-type attacks~\cite{Frey:1994:RCM:179607.179640};
and the group order should be prime to $p = \operatorname{char}(\F_q)$
to avoid ``anomalous curve'' attacks~\cite{Ruck:1999:DLD:312050.312068}.
In the context of unknown-order groups,
it is (by definition) not possible to know whether the Jacobian meets these
conditions or not.
Fortunately, the vulnerable group orders are extremely rare:
a randomly generated hyperelliptic Jacobian
will have a large prime dividing its order with very high probability.
The security of random ideal class groups as groups of unknown order
depends on similar assumptions and heuristics~\cite{cohen1984heuristics}.

To ensure that we do not reduce the difficulty of point
counting using maps to subvarieties, the curve should have a simple
Jacobian---that is, there should not be any nontrivial Abelian
subvarieties in the Jacobian.
A randomly chosen $C$ will ensure this with overwhelming probability
(generic Jacobians are absolutely simple),
but we should still be careful to ensure that
there is no morphism $C \to D$
with $D$ not isomorphic to $C$ or the projective line
(for example, $D$ an elliptic curve),
since then $\Jac{D}$ would be a nontrivial Abelian subvariety of $\Jac{C}$.

We must also exclude curves whose Jacobians have special endomorphisms,
such as the efficiently-computable real multiplication
exploited in~\cite{AGS-ANTS}.
Again, a randomly chosen $C$ will avoid these special classes of
curves with overwhelming probability,
since they form positive-codimensional subspaces of the moduli space.
Recent work of Thakur~\cite{cryptoeprint:2020:348} further discusses
classes of curves to avoid. 
}

\subsection{Generating hyperelliptic Jacobians of unknown order}

Algorithm~\ref{alg:Gen-genus3-b}
(\texttt{Gen})
takes security parameters $(\lambda,\rho)$
(as in Definition~\ref{def:lambda-rho}),
and outputs a generator $P$ for a group \(G\) such that Sutherland's algorithm
running in time \(2^\lambda\) succeeds in computing \(\#G\)
with probability less than \(1/2^\rho\).
The group \(G\) is realized as (a subgroup of) a genus-3 hyperelliptic Jacobian.
Having chosen a suitable prime $p$
as a function of \((\lambda, \rho)\),
the algorithm samples a uniformly random monic irreducible degree-7
polynomial $f(x)$ in $\F_p[x]$
and polynomials \(u\) and \(v\)
such that \(\langle{u,v}\rangle\)
is the Mumford representation of a divisor class \(P\) in \(\Jac{C}(\F_p)\),
where \(C\) is the curve defined by \(y^2 = f(x)\).
Being random,
\(P\) generates a large-order subgroup of \(\Jac{C}(\F_p)\)
with high probability.

Taking $f$ to be random
makes the probability that $C$ is a ``weak'' curve
overwhelmingly small.
The order of a random Jacobian
should have a large prime factor,
protecting against Pohlig--Hellman;
the largest prime factor should not divide $q^k - 1$ for small $k$,
protecting against MOV-type attacks~\cite{Frey:1994:RCM:179607.179640};
and the group order should be prime to $p = \operatorname{char}(\F_q)$
to avoid ``anomalous curve'' attacks~\cite{Ruck:1999:DLD:312050.312068}.
Randomly-sampled Jacobians do not have special endomorphisms,
such as the efficiently-computable real multiplication
exploited for faster point counting in~\cite{AGS-ANTS},
because these special classes of
curves form positive-codimensional subspaces of the moduli space.
Recent work of Thakur~\cite{cryptoeprint:2020:348} further discusses
classes of curves to avoid. 
The security of random ideal class groups as groups of unknown order
depends on similar assumptions and heuristics~\cite{cohen1984heuristics}.

Taking $f$ irreducible over $\F_p$ ensures that
$\Jac{C}(\F_p)$ has no points of order 2.
As we will see in~\S\ref{sec:known-order}, it may be possible to
construct points of small odd order.
We could try this for a few small primes $\ell$
to eliminate $C$ with small factors in $\#\Jac{C}(\F_p)$,
but this makes no significant difference to
the probability of semismoothness of $\#\Jac{C}(\F_p)$,
and thus to Sutherland's algorithm.
Our simulations
showed that rejecting random group orders divisible by the first few
primes decreased the semismoothness probability by less than a factor of~$2$.

\begin{algorithm}[ht]
    \caption{\texttt{Gen}.  Constructs a random unknown-order (subgroup of a) genus-3 hyperelliptic Jacobian.}
    \label{alg:Gen-genus3-b}
    \KwIn{$(\lambda, \rho)$}
    \KwOut{%
        A prime $p$, a hyperelliptic genus-3 curve $C/\F_p$,
        and $P \in J_C$
        such that $\langle{P}\rangle$
        has unknown order.
    }
    Determine $n$ such that a random genus-3 curve over an $n$-bit prime field has $\lambda$-bits of security with probability $1 - 1/2^\rho$
    \;
    \(p \gets \) a random $n$-bit prime
    \;
    Sample random $u(x) = x^3 + u_2x^2 + u_1x + u_0$ in \(\F_p[x]\)
    \;
    Sample random $v(x) = v_2x^2 + v_1x + v_0$ in \(\F_p[x]\)
    \;
    \Repeat{$\gcd( f(x), f'(x)) = 1$ \textbf{and} $f$ is irreducible}{
        Sample random $w(x) = x^4 + w_3x^3 + w_2x^2 + w_1x + w_0$ in \(\F_p[x]\)
        \;
        \(f(x) \gets v(x)^2 + u(x)w(x)\)
        \;
    }
    \(P \gets \langle{u,v}\rangle\)
    \;
    \Return{\((p, C, P)\) where \(C\) is the hyperelliptic curve $y^2 = f(x)$ over \(\F_p\)}
    \;
\end{algorithm}

\hide{
Having chosen $C$, we also need an element $P$ of $\Jac{C}(\F_q)$ to serve as a generator for the group of unknown order.
This can be chosen in a similar way, choosing random coefficients for
$u(x)$ as the first polynomial in the Mumford representation $\langle u,
v \rangle$ of $P$,
before solving for $v$ such that $v^2 = f \pmod{u}$ (recovering $v$ can be done in the same way as
from the compressed form of a divisor \cite{hess2001two});
if no such $v$ exists, we try another $u$.
Of course, the order of $P$ is also unknown,
so we cannot know whether $P$ generates the full group or a subgroup.
But as we will see below, there is a high probability that a randomly
selected $P$ will not have a small order---so it will at least generate
a large-order subgroup of $\Jac{C}(\F_q)$.
}

To ensure that not even the constructor of $C$ knows $\#\Jac{C}(\F_p)$,
and that $C$ and $P = \langle{u,v}\rangle$ were indeed generated randomly,
we suggest that $u$, $v$, and $w$ be chosen by deterministic ``nothing up my
sleeve''-type process.
For example, the coefficients of $f$ might be taken
from the hash of a certain string.
Suppose this process were manipulated by taking multiple ``seeds'',
and testing each resulting curve for weakness.
If the probability of encountering a weak curve among random curves is $\delta$,
and testing for weakness costs $2^n$ operations,
then a malicious actor requires around $\delta^{-1}2^n$ operations to generate a weak $C$.
A sceptical verifier, on the other hand,
must only test the proposed $C$ just once to detect cheating,
at a cost of just $2^n$ operations.
This imbalance of the cost of cheating versus verifying is a deterrent for attackers,
regardless of the weakness in question.

\hide{
The only ``weak cases'' we are aware of are weak groups for Sutherland's algorithm or special curves for which point-counting is easy (such as CM curves, RM curves, etc).
We are choosing the polynomial $f(x)$ that defines the curve uniformly in $\F_p[x]$, where $p > 2^{300}$, so we expect the probability to hit a special curve to be negligible.
\TODO[Revisit this discussion to ensure it is rigorous enough.]
}

Now the order of the Jacobian $\Jac{C}(\F_p)$ (and the subgroup generated by $P$)
cannot feasibly be computed, not even by the party who constructed the
curve: we have achieved trustless setup.
This group can then be used in cryptographic constructions including accumulators and VDFs.
Overall, the generation of a new hyperelliptic curve is relatively cheap. Therefore, just as in the case of class groups, it should be feasible to generate a new group of unknown order for each new instance of an accumulator or VDF if desired.

Elements of \(\Jac{C}(\F_q)\) are represented as pairs of
polynomials $\langle u,v \rangle$ with $\deg{v} < \deg{u} \le g$,
so elements can be stored concretely with 6 elements of $\F_q$,
and further compressed to just 3 \(\F_q\)-elements and 3 extra bits
using the method of~\cite{hess2001two}.\footnote{
    Given that hyperelliptic Jacobians are a function-field analogue
    of ideal class groups of quadratic fields,
    it is natural to ask why the almost-ideal hyperelliptic Jacobian element compression
    algorithm of~\cite{hess2001two}
    does not have an efficient class-group analogue.
    To compress a Jacobian element \(\langle{u,v}\rangle\),
    the algorithm of~\cite{hess2001two} begins by factoring \(u\),
    a polynomial over a finite field;
    we know how to do this efficiently.
    A class-group analogue compressing \((a,b)\)
    would need to factor the integer \(a\),
    which is a much harder problem.
}
For \((\lambda,\rho) = (128,55)\),
with \(\approx 640\)-bit fields,
this means that group elements can be stored in \(\approx 1920\) bits;
elements of a class group of equivalent security
require \(\approx 2860\) bits with the compression
of~\S\ref{sec:compression} (or \(\approx 3840\) bits without it).
Moving to the more paranoid security level
of \((\lambda,\rho) = (128,128)\),
hyperelliptic Jacobian elements require \(\approx 3392\) bits
while class-group elements require \(\approx 5088\) bits
(or \(\approx 6784\) bits without the compression
of~\S\ref{sec:compression}).
We therefore claim that genus-3 Jacobians offer more compact elements
than class groups at the same security level.

\hide{
We now come to concrete suggestions for the bound $s$
in Assumption~\ref{assum:infeasible}.
While there have been no implementations
of Schoof-type algorithms in genus 3,
we can infer something about its difficulty
from work in the substantially easier, yet already hard, genus-2 case.
In view of the results discussed in~\S\ref{sec:point-counting},
taking $s = 60$ should be sufficient:
at this point, the memory and time requirements should make computation infeasible.

Let us take $s = 60$ in order to make some concrete efficiency claims.
Consider the modified PoE from Example~\ref{example:PoE-cofactor}.
The only place where $s$ affects efficiency is in the final verification step,
checking $[N]([\ell]Q + [r]U) = W$;
here, $N = 60!$ has 273 bits.
The choice of prime $\ell$ respects the security parameter, so would
perhaps be on order of $100$ or more bits; $r$ would be similar.
So the additional cost to the verifier is comparable to the existing cost.
Thus, we claim that using genus-3 Jacobians should still be faster than using ideal class groups
(even more so if similar countermeasures are applied to class groups,
as we suggest in~\S\ref{sec:low-order-class-groups}).
}

\section{
    Elements of known order
}
\label{sec:known-order}

We now briefly consider the problem of constructing points of known order
in groups of unknown order.
This will give us an idea of how to work with Jacobians
when the low-order or adaptive root assumptions are imposed.

\subsection{Low-order assumptions and cofactors}

Common additional requirements on unknown-order groups
include
\begin{itemize}
    \item
        the \emph{low-order assumption}:
        finding an element $P$ of a given order \(s\) in \(G\) is hard
        (see~\cite[Def.~1]{cryptoeprint:2018:712});
        and
    \item
        the \emph{adaptive root assumption}:
        extracting roots of a given element---that is,
        given \(Q\) and \(s\), find \(P\) such that \(Q = [s]P\) in
        \(G\)---is hard
        (see~\cite[Def.~2]{cryptoeprint:2018:712}
        and~\cite{10.1007/978-3-030-17659-4_13}).
\end{itemize}
These assumptions only make sense if the adversary must solve arbitrary
instances in a randomly-sampled $G$;
it is not possible to define security for a fixed $G$.
Example~\ref{example:PoE} describes Wesolowski's Proof of Exponentiation (PoE),
a protocol which requires these assumptions.

\begin{example}[PoE]
    \label{example:PoE}
    Let $G$ be a group, chosen according to security
    parameter $\lambda$. The Proof of Exponentiation takes as input $u$ and
    $w$ in $G$ and $x$ in $\Z$, and aims to prove that $u^x = w$
    in significantly less time than it takes to compute~$u^x$.
    It proceeds interactively as follows
    (although it can be made non-interactive:
    see~\cite{10.1007/978-3-030-17659-4_13}).
    \begin{enumerate}
        \item Verifier sends a random prime $\ell \in \text{Primes}(\lambda)$ to Prover.
        \item Prover computes $q = \lfloor x/\ell \rfloor$
            and $Q = u^q$,
            and sends $Q$ to Verifier.
        \item Verifier computes $r = x \text{ mod }\ell$, and accepts if $Q^\ell u^r = w$
    \end{enumerate}
\end{example}

To see why the security of this protocol
requires the low-order assumption,
suppose we know an element $\epsilon$ of order $2$ in $G$
(for example, if $G$ is an RSA group, then we can take \(\epsilon = -1\)).
Then for any valid proof that $u^x = w$, we can easily generate a
false proof of the contradictory statement $u^x = \epsilon w$,
by replacing $Q$ with $Q' = \epsilon Q$ in the proof.
Since $\ell$ is odd, $(Q')^\ell u^r = \epsilon Q^\ell u^r = \epsilon w$
holds despite the fact that $u^x \not= \epsilon w$.
This is why when using RSA groups, it is important to use the quotient
$(\Z/N\Z)^*/\langle{\pm1}\rangle$ to eliminate this element.


Suppose we are given an algorithm \texttt{Gen}
constructing unknown-order groups
reaching the \((\lambda,\rho)\) security level.
If we can specify a set
\(\mathcal{S}\) containing the integers \(s\)
such that we can construct elements of order \(s\)
or extract \(s\)-th roots
in groups \(G\) output by \texttt{Gen}
in \(< 2^\lambda\) operations
with probability \(> 2^{-\rho}\),
then the low-order and adaptive-root assumptions
hold in the subgroup
\begin{equation} \label{eq:subgroup}
    [S]G = \{ [S]P \mid P \in G \}
    \qquad
    \text{where}
    \qquad
    S := \operatorname{lcm}(\mathcal{S})
    \,.
\end{equation}
We will propose conservative choices for \(\mathcal{S}\)
for concrete groups below.
In the meantime, to give some concrete intuition,
if we take \(\mathcal{S} = \{1,\ldots,60\}\)
then \(S\) is an \(84\)-bit integer,
so multiplication by $S$ is efficient.

The operation of protocols such as accumulators in $[S]G$ is standard,
but some protocols may need modification:
for example,
proofs may require an extra check that an element is indeed in the group.
The issue here is that given a point $Q$ in $G$,
testing subgroup membership $Q \in [S]G$ is not easy.
However,
the original point $Q$ is effectively a proof that $[S]Q$ is in $[S]G$,
because this can be easily verified;
so $Q$ should be sent instead of $[S]Q$ in cryptographic protocols,
and the verifier can perform the multiplication by $S$ themselves.

Using $[S]G$ in place of $G$ has an impact on efficiency,
due to the extra scalar multiplications required.
This impact is highly protocol-dependent,
but in most cases only a few extra multiplications should be needed.
To give a specific example,
we revisit the PoE from Example~\ref{example:PoE} in Example~\ref{example:PoE-cofactor}.
The verifier only needs to perform one extra multiplication-by-$S$ during verification
when working in $[S]G$ instead of $G$.
We suggest that this is efficient enough for practical use,
and that other protocols using the adaptive root assumption
can be modified in a similar way.

\begin{example}[\protect{PoE in \([S]G\)}]
    \label{example:PoE-cofactor}
    We begin the PoE protocol with input
    $U \in G$, $W \in [S]G$, and $x \in \Z$.
    The claim to be proven is that $[x][S]U = W$ in $[S]G$.
    The protocol proceeds as follows:
    \begin{enumerate}
	    \item Verifier selects a random $\ell$ from
            $\text{Primes}(\lambda)\setminus\mathcal{S}$
            and sends $\ell$ to Prover.
	    \item Prover computes $q = \lfloor x/\ell \rfloor$,
            computes $Q = [q]U$ in $G$
            and sends $Q$ to Verifier.
	    \item Verifier computes $r = x \text{ mod }\ell$,
            and accepts if $Q$ is in $G$ and $[S]([\ell]Q + [r]U) = W$.
    \end{enumerate}
\end{example}

The security of this protocol depends on the choice of \(\mathcal{S}\).
Given a valid proof of $[x][S]U = W$,
in order to falsely prove $[x][S]U = W+P$,
the prover must compute $[1/\ell]P$ for the $\ell$ chosen by the verifier.
This may be possible if the order of $P$ is known,
but this is supposed to be infeasible because \(\ell\) is not in \(\mathcal{S}\).

\begin{remark}
    The impact of finding small-order elements
    is highly domain-specific.
    For example,
    in the VDF of~\cite{10.1007/978-3-030-17659-4_13,cryptoeprint:2018:712},
    even if points of known order can be found,
    forging a false PoE still requires knowing the true result of the
    exponentiation---and hence still requires computing the output of the VDF.
    Relying on a PoE would thus break the requirement that the VDF output is unique,
    but it would still provide assurance of the delay.
    For accumulators,
    we need an analogue of the strong RSA assumption rather than the adaptive root assumption:
    it should be hard to find \emph{chosen} prime roots of an element
    (recall that the membership witness of $\ell$ in $A$ is the $\ell$-th root of $A$).
    This case can be addressed differently,
    by simply disallowing the accumulation of small primes $\ell$
    dividing elements of \(\mathcal{S}\).
    Finding $\ell$-th roots with $\ell$ not in \(\mathcal{S}\)
    is supposed to be infeasible,
    so here we do not need to use $[S]G$.
\end{remark}

\subsection{Elements of known order in class groups and Jacobians}

For class groups,
it is well-known that the factorization of \(\Delta\) reveals the
2-torsion structure of \(Cl(\Delta)\),
and even allows the explicit construction of elements of order \(2\).
Similarly, for Jacobians,
if \(C/\F_q\) is defined by \(y^2 = f(x)\),
then the factorization of \(f(x)\) reveals the \(2\)-torsion structure
of \(\Jac{C}(\F_q)\), and lets us construct explicit points of order \(2\).
This motivates the restriction to negative prime \(\Delta\)
when using \(Cl(\Delta)\) as an unknown-order group,
and our restriction to irreducible \(f\)
in Algorithm~\ref{alg:Gen-genus3-b}.

Belabas, Kleinjung, Sanso, and Wesolowski~\cite{cryptoeprint:2020:1310}
give several constructions of special discriminants \(\Delta\)
together with a known ideal of small odd order in \(Cl(\Delta)\).
Similarly,
we can construct hyperelliptic Jacobians equipped with a point of small
order, as in Example~\ref{example:Jacobian-known-order}.

\begin{example}
    \label{example:Jacobian-known-order}
    Let \(\ell\) be an odd prime,
    and set \(g = (\ell-1)/2\).
    Choose a polynomial \(c(x)\) over \(\F_q\)
    of degree \(\le g\)
    such that \(f(x) := x^\ell + c(x)^2\) is squarefree.
    Then \(C: y^2 = f(x)\) is a hyperelliptic curve of genus \(g\),
    and
    the divisor $D = (0,c(0)) - (\infty)$
    represents a nontrivial element of order $\ell$ in \(\Jac{C}(\F_q)\)
    (because \(\ell D\) is the principal divisor of the function $y-c(x)$).
    Taking $\ell = 7$ gives a family of genus-3 Jacobians with a known element of order $7$.
\end{example}

These discriminants and curves generally do not occur when
\(\Delta\) or \(f\) is chosen in a ``nothing-up-my-sleeve'' way.
In any case,
the risk of choosing groups with constructible small-order elements
can be eliminated by using a smooth cofactor \(S\).

\hide{
If $E/\F_q$ is an elliptic curve,
then whether or not of the order of $E(\F_q)$ is known,
we can try to compute elements of small order $\ell$
by finding roots of the division polynomial $\Psi_\ell$ for $E$.
In the same way, we can try to construct elements of small, known order on
hyperelliptic Jacobians using their division ideals,
hence invalidating the adaptive root assumption.
}

There are three curve-specific methods for constructing elements of
known small order, or deducing information about small divisors of
the order of a given element, which do not apply to class groups.
The first is to use the division ideals.
This is practical for small primes like \(2\), \(3\), and \(5\) 
(reinforcing the need for the cofactor $S$ above).
However,
if we assume that there exists no feasible Schoof-type algorithm
for counting points on genus 3 curves,
then we implicitly assume that it is infeasible to construct
\(\ell\)-division ideals for \(\ell\) larger than some bound
that is polynomial with respect to the security parameters.

The second method is to use repeated divisions by 2 in $\Jac{C}(\F_q)$
to construct points of order $2^k$ for arbitrarily large $k$.
Since $2^k$ is coprime to all odd primes $\ell$,
this allows malicious provers to easily find $\ell$-th roots
for these points (that is, given a point $Q$, find $P$ such that $[\ell]P = Q$).
But repeated division by $2$ in $\Jac{C}(\F_q)$
requires the repeated extraction of square roots in \(\F_p\),
which quickly requires repeated quadratic field extensions and the field computations blow-up exponentially.
Using hyperelliptic curves in the form $y^2 = f(x)$ with $f(x)$ irreducible
ensures that the required square roots do not exist in $\F_p$.

Similarly, we might calculate repeated divisions by very small odd primes.
Using the group $[S]\Jac{C}(\F_q)$ will
kill off powers of these small primes dividing the group order.
It could also be possible to simply test for these repeated divisions
during the curve generation procedure, allowing parties to check for
small factors of the group order---and then kill these off with a tailored choice of $S$.
It is an interesting open problem to generate an easily
verifiable proof that a Jacobian does not have any points of low order.

The third method involves the Tate pairing.
Let $C$ be a hyperelliptic curve over~$\F_{q}$,
let $\ell$ be a prime (coprime to~$q$),
and let
$k$ be the smallest positive integer such that $\ell \mid q^k - 1$.
The reduced \(\ell\)-Tate pairing is a bilinear mapping
\begin{equation*}
    t_\ell : \Jac{C}[\ell] \times
    \Jac{C}(\F_{q^k})/\ell\Jac{C}(\F_{q^k})
    \longrightarrow
    \mu_\ell
    \subset \F_{q^k}^\times
    \,,
\end{equation*}
where $\mu_\ell$ is the group of $\ell$-th roots of unity
(see~\cite{ghv07}
for background on pairings on hyperelliptic curves).
If we can find points of known order $\ell$,
then the \(\ell\)-Tate pairing can give information about the
$\ell$-divisibility of other points.

Suppose we can find a point $Q$ in $\Jac{C}(\F_{q^k})$
of small known prime order $\ell$.
Then for any point $P$ in $\Jac{C}(\F_{q})$,
we can efficiently compute $t_{\ell}(Q,P)$ in $\mu_{\ell}$
(assuming $k$ is only polynomially large in $\log q$).
Now,
if $\ell \nmid \operatorname{Ord}(P)$, then $P = \ell P'$ for some $P'$, so
$t_\ell(Q,P) = 1$.
By the contrapositive,
if $t_\ell(Q,P) \not= 1$,
then $\ell$ divides the order of $P$.

Unfortunately, the converse is not so simple:
$t_\ell (Q,P) = 1$ for a single point $Q$ of order $\ell$
does not imply \(\ell \nmid \operatorname{Ord}(P)\).
Instead, it must be shown that
$t_\ell (Q,P) = 1$ for \emph{all} $Q$ in $\Jac{C}[\ell]$.
Thus, we require a basis $\{Q_1,\ldots,Q_{2g}\}$ of $\Jac{C}[\ell]$
which we can test:
if $t_{\ell}(Q_i, P) = 1$ for $1 \le i\le 2g$,
then the bilinearity of the Tate pairing implies
$t_{\ell}(Q,P)$ for all $Q$ in $\Jac{C}[\ell]$,
and hence that $\gcd(\operatorname{Ord}(P), \ell) = 1$.

The utility of this approach
is limited by the difficulty of constructing points of order \(\ell\),
but also by the field extension degree \(k\)
(since the coordinates of \(Q\) and the value of \(t_\ell(Q,P)\)
are in \(\F_{q^k}\));
and \(k\), being the order of \(q\) modulo \(\ell\),
tends to blow up with \(\ell\).
If \(q\) is well-chosen,
then in practice we can learn very little information about the orders
of random points in \(\Jac{C}(\F_q)\),
or any information at all for points in $[S]\Jac{C}(\F_q)$ for a
suitable $S$.
For the Jacobian case, we conjecture that $\mathcal{S} = \{1, \ldots, 60\}$
is sufficient for a $(128,128)$ security level, based on the discussion in \S\ref{sec:point-counting}.
For class groups, $\mathcal{S}$ can either be empty in the case of a prime
discriminant, or $\mathcal{S} = \{ 2 \}$ if a non-prime discriminant is used.

\printbibliography 

\appendix

\end{document}